\DeclareMathAlphabet\EuRoman{U}{eur}{m}{n}
\SetMathAlphabet\EuRoman{bold}{U}{eur}{b}{n}
\newcommand{\eurom}{\EuRoman}
\theoremstyle{definition}
\newtheoremstyle{mystyle}
  {}
  {}
  {\itshape}
  {}
  {\bfseries}
  {.}
  { }
  {}
\theoremstyle{mystyle}
\newtheorem{theorem}{Theorem}
\theoremstyle{mystyle}
\newtheorem{lemma}{Lemma}
\theoremstyle{mystyle}
\theoremstyle{mystyle}
\theoremstyle{remark}
\newtheorem{remark}{Remark}
\pgfplotsset{compat=newest}
\pgfplotsset{plot coordinates/math parser=false}
\newlength\figureheight
\newlength\figurewidth
\Crefname{figure}{}{}
\newcommand{\abs}[1]{\mleft| #1 \mright|}
\newcommand{\norm}[1]{\mleft\| #1 \mright\|}
\newcommand{\C}{\eurom{C}}
\newcommand{\I}{\eurom{I}}
\newcommand{\mi}[2]{{\I}\mleft(#1 \, ; #2 \mright)}
\newcommand{\h}{\eurom{h}}
\newcommand{\ent}[1]{{\h}\mleft(#1\mright)}
\newcommand{\Exp}{\eurom{E}}
\newcommand{\expect}[1]{{\Exp}\mleft[#1\mright]}
\newcommand{\varz}{\sigma_z^2}
\newcommand{\Yv}{{\bf Y}}
\newcommand{\Hv}{{\textsf{H}}}
\newcommand{\Mv}{{\textsf{M}}}
\newcommand{\Xv}{{\bf X}}
\newcommand{\xv}{{\bf x}}
\newcommand{\Zv}{{\bf Z}}
\newcommand{\Iv}{\textsf{I}}
\newcommand{\Uv}{\textsf{U}}
\newcommand{\Lv}{\mathsf{\Lambda}}
\newcommand{\Dv}{\mathsf{D}}
\newcommand{\Vv}{\textsf{V}}
\newcommand{\sX}{{\cal X}}
\newcommand{\Vol}[2][]{\mathrm{Vol}_{#1} \mleft( #2 \mright) }
\newcommand{\Bocs}[2][]{\mathrm{Box}_{#1} \mleft( #2 \mright) }
\newcommand{\ball}[2][]{{\cal B}_{#1} \lr{#2}}
\newcommand{\lr}[1]{\mleft( #1 \mright)}
\newcommand{\lrs}[1]{\mleft[ #1 \mright]}
\newcommand{\lrc}[1]{\mleft\{ #1 \mright\}}
\newcommand{\N}{\eurom{N}}
\newcommand{\K}{\eurom{K}}
\newcommand{\R}{\eurom{R}}
\newcommand{\Pj}{\eurom{P}}
\title{The Capacity of Fading Vector Gaussian Channels Under Amplitude Constraints on Antenna Subsets}
\author{ }
\date{ }
\begin{document}

\author{%
   \IEEEauthorblockN{Antonino Favano\IEEEauthorrefmark{1}\IEEEauthorrefmark{2},
                     Marco Ferrari\IEEEauthorrefmark{2},
                      Maurizio Magarini\IEEEauthorrefmark{1},
                     and Luca Barletta\IEEEauthorrefmark{1}}
   \IEEEauthorblockA{\IEEEauthorrefmark{1}%
                     Politecnico di Milano,
                     Milano, Italy,
                     \{antonino.favano, maurizio.magarini, luca.barletta\}@polimi.it}
   \IEEEauthorblockA{\IEEEauthorrefmark{2}%
                     CNR-IEIIT, Milano, Italy,
                     marco.ferrari@ieiit.cnr.it}
 }

\maketitle

\begin{abstract}
Upper bounds on the capacity of vector Gaussian channels affected by fading are derived under peak amplitude constraints at the input. The focus is on constraint regions that can be decomposed in a Cartesian product of sub-regions. This constraint models a transmitter configuration employing a number of power amplifiers less than or equal to the total number of transmitting antennas. In general, the power amplifiers feed distinct subsets of the transmitting antennas and partition the input in independent subspaces. Two upper bounds are derived: The first one is suitable for high signal-to-noise ratio (SNR) values and, as we prove, it is tight in this regime; The second upper bound is accurate at low SNR. Furthermore, the derived upper bounds are applied to the relevant case of amplitude constraints induced by employing a distinct power amplifier for each transmitting antenna.
\end{abstract}

\section{Introduction}
Amplitude constraints accurately model the main limitation induced by power amplifiers due to their nonlinear behavior. For this reason, the evaluation of the channel capacity under peak amplitude constraints is a research topic of great practical interest. One of the first contributions in this field is thanks to Smith~\cite{Smith}. In his work, he investigates the capacity of scalar Gaussian channels and the capacity-achieving input distribution. He proves that the optimal input distribution is discrete and composed of a finite number of mass points. In~\cite{Shamai}, the authors extend Smith's findings to quadrature Gaussian channel under amplitude constraints on the norm of the input, proving that the capacity-achieving input distribution is again discrete, made of a finite number of mass points, and also uniformly distributed in its phase. A further generalization to vector Gaussian channels is presented in~\cite{Rassouli}.
Other significant results on the discreteness of the optimal input distribution are presented in~\cite{Tchamkerten2004,Chan2005,Mamandipoor2014}.

In~\cite{McKellips}, McKellips presents a tight upper bound on the capacity of scalar Gaussian channels under peak amplitude constraints. The authors of~\cite{thangaraj2017capacity} rederive the McKellips' upper bound through a dual capacity expression and generalize it to higher dimensions. Furthermore, they improve on McKellips' result and define a more accurate upper bound, which they refer to as \emph{refined} upper bound. In~\cite{ourISIT2021}, the present authors define a numerical algorithm to evaluate an arbitrarily precise estimate of the channel capacity and of its capacity-achieving distribution.

In the aforementioned works, the amplitude constraint is set on the norm of the input vector, which  correctly models the limitation induced by a single power amplifier common to all the transmitting antennas. Furthermore, the considered channel matrix is assumed to be an identity matrix.

The authors of~\cite{ElMoslimany2016} evaluate capacity bounds for $2 \times 2$ multiple input multiple output (MIMO) systems under rectangular peak amplitude constraints and any arbitrary channel matrix. In~\cite{Dytso}, the authors further generalize the investigation to higher dimensional vector Gaussian channels and derive bounds for arbitrary constraint regions. In~\cite{Dytso1shell}, interesting insights on the capacity-achieving input distribution for low signal-to-noise ratio (SNR) levels are presented. Finally, in~\cite{ourITW2020,arxivSP} the present authors derive an upper bound for arbitrary convex constraint regions that, together with the entropy power inequality (EPI) lower bound, provides a vanishing capacity gap at high SNR.

In~\cite[Appendix~F]{li2020capacity}, the authors use a duality-based upper-bounding technique and a suitable auxiliary product output distribution to derive an upper bound that is given by a sum of upper bounds on independent sub-spaces. They derive their upper bound for a system with number of transmitting antennas $N_T$ strictly larger than the number of receiving antennas $N_R$ and for an input constraint region defined as a Cartesian product of $N_R$ one-dimensional sub-regions. 

\subsection*{Contributions}

In this paper, we adapt the result in~\cite[Appendix~F]{li2020capacity} to the case of $\N \times \N$ MIMO systems and generalize their approach to input constraint regions defined as the Cartesian product of an arbitrary number of sub-regions $\K \leq \N$. In addition to the mentioned transmitter configuration using a single power amplifier, another configuration of practical interest is that of employing separate power amplifiers for each transmitting antenna. For this latter case, the resulting constraint region turns out to be a Cartesian product of the constraint imposed by each amplifier, which we refer to as \emph{per-antenna} constraint. 

In this work, we further generalize the constraint region as a Cartesian product of sub-regions lying in sub-spaces of the MIMO system. This generalization can model the transmitter configuration employing multiple power amplifiers, each one feeding a given subset of the transmitting antennas. We propose two upper bounds targeting peak amplitude constraints that can be decomposed into a Cartesian product of sub-regions. The first upper bound that we derive is suitable for high SNR values, and we prove that it converges to the EPI lower bound for increasing SNR. We also define an upper bound suitable for low SNR values. Finally, we apply our bounds to the practical scenario of the per-antenna constraint, which becomes a special case of the considered Cartesian constraint regions.

\subsection*{Paper Organization}

In Sec.~\ref{S:sysmod} we define the channel model, while in Sec.~\ref{S:mainres} we present our main results. We provide high and low SNR regime upper bounds and we investigate their asymptotic behavior. Furthermore, in Sec.~\ref{S:PAnumeric} we specialize the derived upper bounds to the per-antenna constraint an provide numerical results verifying the predicted asymptotic behavior. Finally, Sec.~\ref{S:conclusion} concludes the paper.

\subsection*{Notation}
We use bold letters for vectors ($\xv$) and uppercase letters for random variables ($X$). We represent the $n \times 1$ vector of zeros by $\bm{\mathsf{0}}_n$ and the $n \times n$ identity matrix by $\Iv_n$. We denote by ${\cal CN}(\boldsymbol{\mu},\mathsf{\Sigma})$ a multivariate complex Gaussian distribution and by ${\cal N}(\boldsymbol{\mu},\mathsf{\Sigma})$ a multivariate real Gaussian distribution, both with mean vector $\boldsymbol{\mu}$ and covariance matrix $\mathsf{\Sigma}$. For a given matrix $\Hv$, we define by $\lambda_i\lr{\mathsf{\Hv}}$ the $i$th singular value of $\Hv$. Finally, by $\ball[n]{\R}$ we denote the $n$-dimensional closed ball of radius $\R$ and we define the \mbox{$n$-dimensional} box of sides $\R$ as $\Bocs[n]{\R} \triangleq \{ \xv :  \abs{x_i} \leq \R/2 , \ i=1,\dots,n \}$.

\section{Channel Model} \label{S:sysmod}

Let us consider an $\N \times \N$ real MIMO system with input-output relationship given by
\begin{align} \label{eq:model}
\Yv &=  \Hv \Xv + \sigma_z \Zv,
\end{align}
where $\Yv \in \mathbb{R}^\N$ is the output vector, $\Hv$ is any full rank channel fading matrix, $\Xv \in \sX \subset \mathbb{R}^\N$ is the input vector, with $\sX$ being the input constraint region, and $\Zv \in \mathbb{R}^\N$ is a noise vector such that $\Zv \sim {\cal N}(\bm{\mathsf{0}}_{\N}, \varz \Iv_{\N})$. Let us assume $\Hv$ to be constant over all channel uses and known both at the transmitter and at the receiver.

Throughout this paper we consider input constraint regions that can be decomposed into the Cartesian product of sub-regions. Let us denote by $\K$ the number of sub-regions in $\sX$. We define
\begin{align} \label{eq:cartesianX}
    \sX \triangleq \sX_1 \times \sX_2 \times \dots \times \sX_\K,
\end{align}
where the operator $\times$ denotes the Cartesian product and $\sX_i \subset \mathbb{R}^{\N_i}$ is the $i$th $\N_i$-dimensional sub-region of $\sX$. For convenience in the indexing notation, we also define $\N_0 = 0$. It is worth observing that $\forall i>0, \ \N_i \in \mathbb{N^+}$ and that $\sum_{i=1}^{\K} \N_i = \N$. Let us define the maximum radius of each sub-region as
\begin{align}
\R_i \triangleq \sup_{\xv \in \sX_i} \norm{\xv},
\end{align}
for all $i = 1, \dots, \K$. For the sake of simplicity we will assume $\R_i = \R, \ i = 1,\dots,\K$. Note that, setting all the $\R_i$'s to $\R$ can be done without loss of generality, by scaling the related sub-spaces of $\Hv$ accordingly. 

Since we consider peak amplitude-constrained input distributions, we resort to the following SNR definition
\begin{align}
    \text{SNR} = \frac{\R^2}{\N \varz}.
\end{align}
Finally, we define the channel capacity as
\begin{align} 
    \C &\triangleq \max_{P_\Xv: \: \text{supp}(P_\Xv)\subseteq {\sX}} \mi{\Xv}{\Yv}, 
\end{align}
where $P_\Xv$ is the input distribution law.

\section{Main Results} \label{S:mainres}
In this section we derive two upper bounds. The first upper bound is suitable for the high SNR regime and since in this SNR range the input signal is predominant over the noise, the MIMO capacity can be approximated, broadly speaking, by a sum of capacities, each induced by the sub-spaces where the $\sX_i$'s lie.

Furthermore, we introduce a second upper bound, suitable for the low SNR regime. For this latter SNR range, we assume the Gaussian noise to be the dominant component and, therefore, we upper-bound the capacity by using a Gaussian output distribution.

\subsection{High SNR regime}

To derive an upper bound on the channel capacity, suitable for the high SNR regime, we consider an equivalent output multiplied by the inverse of the channel matrix $\Hv$. Note that, the receiver can compute $\Hv^{-1}$ because the matrix $\Hv$ is full rank and it is known at the receiver. We have
\begin{align} 
 \Hv^{-1} \Yv &= \Hv^{-1} \Hv \cdot \Xv + \Hv^{-1} \Zv \\ 
&= \Xv  + \Hv^{-1} \Zv \\
&= \Xv + \Zv_{\Dv},
\end{align}
where $\Zv_{\Dv} = \Hv^{-1} \Zv$ is the resulting noise vector with $\Zv_{\Dv} \sim \mathcal{N}\lr{\bm{\mathsf{0}}_{\N},\Dv}$ and $\Dv = \varz \Hv^{-1} \Hv^{-T}$.
Let us denote by $d_{k,l}$ the element $(k,l)$ of the matrix $\Dv$ and define the main-diagonal block submatrices $\Dv_i$'s of $\Dv$ as
\begin{align} 
    \Dv_i \triangleq \lrs{d_{k,l}}_{k,l=m_i+1}^{m_i+\N_i}, \quad i = 1,\dots,\K, \label{eq:prinsubm}
\end{align}
where $m_i = \sum_{j=1}^{\N_{i-1}} \N_j$ and $m_1 = 0$. Furthermore, let us denote by $\Xv_i$ the $\N_i \times 1$ vector $\Xv_i = \lr{X_{m_i+1}, X_{m_i+2}, \dots, X_{m_i+\N_i}}^T$ and $\Zv_{\Dv,i}$ analogously.
\begin{theorem} \label{thm:UB}
Given the input constraint region $\sX$ defined in~\eqref{eq:cartesianX}, the channel capacity is upper-bounded by
\begin{align} \label{eq:capacity}
    \C \leq \overline{\C}_1 \triangleq \lr{ \sum_{i=1}^{\K} \C_i } +\frac{1}{2}\log\frac{\prod_{j=1}^\K \det \lr{\Dv_j}}{\det \lr{\Dv}},
\end{align}
where
\begin{align}
\C_i \triangleq \max_{P_{\Xv_i} : \: \Xv_{i} \in \sX_{i}}  \ent{\Xv_i + \Zv_{\Dv,i}} -\ent{ \Zv_{\Dv,i}}
\end{align}
and $\Zv_{\Dv,i} \sim \mathcal{N} \lr{\bm{\mathsf{0}}_{\N_i},\Dv_i}$.
\end{theorem}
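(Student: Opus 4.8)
The plan is to exploit the invertibility of $\Hv$ to reduce the problem to the equivalent channel $\tYv = \Xv + \Zv_{\Dv}$ already introduced in the excerpt, and then to bound the output entropy by a sum of per-block entropies via subadditivity. Since $\Hv$ is full rank, the map $\yv \mapsto \Hv^{-1}\yv$ is a bijection, so mutual information is preserved and $\mi{\Xv}{\Yv} = \mi{\Xv}{\tYv}$. Writing $\mi{\Xv}{\tYv} = \ent{\tYv} - \entcnd{\tYv}{\Xv} = \ent{\tYv} - \ent{\Zv_{\Dv}}$, where the last equality uses translation invariance of differential entropy, I reduce the whole problem to upper-bounding the single output entropy $\ent{\tYv}$, since $\ent{\Zv_{\Dv}} = \tfrac{1}{2}\log\lr{(2\pi e)^{\N}\det\Dv}$ is a constant.

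The central step is to apply subadditivity of differential entropy to the block decomposition $\tYv = \lr{\tYv_1,\dots,\tYv_\K}$ with $\tYv_i = \Xv_i + \Zv_{\Dv,i}$, yielding $\ent{\tYv} \leq \sum_{i=1}^{\K}\ent{\tYv_i}$. Here I would use that the marginal law of the $i$th block of $\Zv_{\Dv}$ is $\Zv_{\Dv,i}\sim\mathcal{N}\lr{\bm{\mathsf{0}}_{\N_i},\Dv_i}$, i.e. its covariance is exactly the principal submatrix $\Dv_i$ defined in~\eqref{eq:prinsubm}, so that each term reads $\ent{\tYv_i} = \ent{\Xv_i + \Zv_{\Dv,i}}$ and matches the entropy appearing in the definition of $\C_i$.

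It then remains to take the maximum over the input law. Because $\sX$ is the Cartesian product~\eqref{eq:cartesianX}, any collection of marginals $P_{\Xv_i}$ supported on $\sX_i$ is jointly feasible, so the maximization decouples, $\max_{P_\Xv}\sum_i \ent{\Xv_i+\Zv_{\Dv,i}} = \sum_i \max_{P_{\Xv_i}:\,\Xv_i\in\sX_i}\ent{\Xv_i+\Zv_{\Dv,i}}$. Adding and subtracting $\sum_i \ent{\Zv_{\Dv,i}}$ turns each maximized block term into $\C_i$, while the leftover Gaussian entropies combine through $\ent{\Zv_{\Dv,i}}=\tfrac12\log\lr{(2\pi e)^{\N_i}\det\Dv_i}$ and $\ent{\Zv_{\Dv}}=\tfrac12\log\lr{(2\pi e)^{\N}\det\Dv}$; since $\sum_i\N_i=\N$ the $(2\pi e)$ factors cancel and leave exactly $\tfrac12\log\frac{\prod_j\det\Dv_j}{\det\Dv}$, which is~\eqref{eq:capacity}.

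The main obstacle to keep in view is the correlation of the equivalent noise $\Zv_{\Dv}=\Hv^{-1}\Zv$ across blocks: unlike the true channel, the blocks $\tYv_i$ do not form independent parallel channels, so one cannot simply write $\mi{\Xv}{\tYv}=\sum_i\mi{\Xv_i}{\tYv_i}$. The point is that subadditivity still lets the output entropy split, and the determinant-ratio term $\tfrac12\log\frac{\prod_j\det\Dv_j}{\det\Dv}\geq 0$ quantifies precisely the penalty paid for ignoring this cross-block dependence. The only places demanding care are the correct bookkeeping of the Gaussian constants and the verification that the marginal covariance of block $i$ is indeed $\Dv_i$.
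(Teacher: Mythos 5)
Your proposal is correct and follows essentially the same route as the paper's own proof: invert $\Hv$ to get the equivalent channel $\Xv + \Zv_{\Dv}$, apply sub-additivity of differential entropy across the $\K$ blocks (with the marginal noise covariances $\Dv_i$), decouple the maximization using the Cartesian-product structure of $\sX$, and add and subtract the Gaussian block entropies so the $(2\pi e)$ factors cancel and the residual is $\tfrac{1}{2}\log\tfrac{\prod_j \det(\Dv_j)}{\det(\Dv)}$. Your closing remark that this log-determinant ratio is exactly the price of ignoring cross-block noise correlation matches the paper's own interpretation.
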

\begin{proof}
\begin{align} 
\C &= \max_{P_\Xv : \: \Xv \in \sX} \mi{\Xv}{\Hv\Xv + \Zv}\\
&= \max_{P_\Xv : \: \Xv \in \sX} \mi{\Xv}{\Xv + \Zv_{\Dv}}\\
&= \max_{P_\Xv : \: \Xv \in \sX} \ent{\Xv + \Zv_{\Dv}} - \ent{\Zv_{\Dv}}\\
\begin{split}
&\stackrel{(a)}{\leq} \lr{\max_{P_\Xv : \: \Xv \in \sX} \sum_{i=1}^{\K} \ent{\Xv_i + \Zv_{\Dv,i}}} +\log\frac{1}{\det \lr{\Dv}} \\
&\hphantom{\stackrel{(a)}{\leq} \ }- \frac{\N}{2} \log \lr{2 \pi e} 
\end{split} \\
\begin{split}
&= \lr{ \sum_{i=1}^{\K} \max_{P_{\Xv_i} : \: \Xv_{i} \in \sX_{i}}  \ent{\Xv_i + \Zv_{\Dv,i}}} +\frac{1}{2} \log\frac{1}{\det \lr{\Dv}} \\
&\hphantom{= \ }- \frac{\N}{2}\log \lr{2 \pi e} + \frac{1}{2}\log \frac{\prod_{j=1}^\K \det \lr{\Dv_j}}{\prod_{k=1}^\K \det \lr{\Dv_k}}
\end{split} \label{eq:add&sub} \\
\begin{split}
&= \lr{ \sum_{i=1}^{\K} \max_{P_{\Xv_i} : \: \Xv_{i} \in \sX_{i}}  \ent{\Xv_i + \Zv_{\Dv,i}} - \ent{\Zv_{\Dv,i}}}  \\
&\hphantom{= \ }+\frac{1}{2}\log\frac{\prod_{j=1}^\K \det \lr{\Dv_j}}{\det \lr{\Dv}},
\end{split} \label{eq:Ciproof}
\end{align}
where $(a)$ holds because of the sub-additivity of the differential entropy and $\Zv_{\Dv,i}$ is obtained by marginalizing $\Zv_{\Dv}$ on the $i$th sub-space. Note that, since $\Zv_{\Dv}$ is a multivariate Gaussian with zero mean and covariance matrix $\Dv$, it holds $\Zv_{\Dv,i} \sim \mathcal{N} \lr{\bm{\mathsf{0}}_{\N_i},\Dv_i}$. In~\eqref{eq:add&sub}, we add and subtract $\frac{1}{2} \log \prod_i \det \lr{\Dv_i}$, to finally get the term $\ent{\Zv_{\Dv,i}}$ in~\eqref{eq:Ciproof}.
\end{proof}
\begin{remark}
To obtain a numerical result, each $\C_i$ can be further upper-bounded with a suitable technique, like those presented in~\cite{arxivSP,thangaraj2017capacity,Dytso}.
\end{remark}
\begin{remark} Since $\Dv$ is positive-semidefinite, by Fischer’s inequality~\cite{horn2012matrix}, we have that $\det \lr{\Dv} \leq \prod_{j=1}^{\K} \det \lr{\Dv_j}$.
Therefore, it holds
\begin{align}
    \log\frac{\prod_{j=1}^\K \det \lr{\Dv_j}}{\det \lr{\Dv}} \ge 0.
\end{align}
\end{remark}
\begin{remark}
Intuitively, the logarithmic term in~\eqref{eq:capacity} accounts for the inaccuracy introduced by considering the noise vector $\Zv_{\Dv}$ to be independent on each of the $\K$ sub-spaces. Indeed, whenever $\Hv$ is diagonal we have that $\det\lr{\Dv} = \prod_{i=1}^{\K} \det \lr{\Dv_i}$, then $\log\frac{\det \lr{\Dv}}{\prod_j \det \lr{\Dv_j}}$ goes to zero and inequality~\eqref{eq:capacity} becomes an equality.
\end{remark}

Let us introduce the EPI lower bound~\cite{Dytso} for the channel in~\eqref{eq:model} as
\begin{align}
    \C \geq \underline{\C} &\triangleq \frac{\N}{2} \log \lr{ 1 + \frac{ \lr{\Vol[\N]{\Hv \sX}}^{\frac{2}{\N}} }{2 \pi e \varz}}. \label{eq:EPI}
\end{align}
In the following lemma, we show that the capacity gap between the EPI lower bound and the upper bound in Theorem~\ref{thm:UB} is vanishing when the SNR tends to infinity.
\begin{lemma} \label{lem:UBgap}
When $\varz \to 0$, we have
\begin{align}
    \lim_{\varz \to 0} \overline{\C}_1 - \underline{\C} = 0.
\end{align}
\end{lemma}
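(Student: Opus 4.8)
The plan is to compute the high-SNR ($\varz \to 0$) expansions of $\overline{\C}_1$ and $\underline{\C}$ separately and verify that they agree to leading order. Since Theorem~\ref{thm:UB} gives $\overline{\C}_1 \geq \C$ and~\eqref{eq:EPI} gives $\C \geq \underline{\C}$, the difference $\overline{\C}_1 - \underline{\C}$ is already nonnegative, so it suffices to show both bounds admit the same asymptotic form. The first observation is that the logarithmic correction term in~\eqref{eq:capacity} is in fact constant in $\varz$: writing $\mathsf{G} \triangleq \Hv^{-1}\Hv^{-T}$ and letting $\mathsf{G}_j$ be the principal block of $\mathsf{G}$ matching $\Dv_j$, one has $\Dv = \varz\,\mathsf{G}$ and $\Dv_j = \varz\,\mathsf{G}_j$; since $\sum_{j=1}^{\K}\N_j = \N$, the factors $\varz^{\N_j}$ and $\varz^{\N}$ cancel and the term equals $\tfrac{1}{2}\log \frac{\prod_{j}\det\mathsf{G}_j}{\det\mathsf{G}}$.

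The technical heart is the high-SNR behaviour of each sub-capacity $\C_i$. I would whiten the sub-channel noise by the transformation $\Dv_i^{-1/2}$, which maps $\C_i$ to the capacity of an amplitude-constrained channel with white noise $\mathcal{N}(\bm{\mathsf{0}}_{\N_i},\Iv_{\N_i})$ and constraint region $\Dv_i^{-1/2}\sX_i$, whose volume equals $\Vol[\N_i]{\sX_i}/\sqrt{\det\Dv_i}$. As $\varz \to 0$ this region dilates without bound, i.e. the whitened channel is in its high-SNR regime, and the sub-channel EPI lower bound together with a McKellips-type / duality upper bound (as in~\cite{arxivSP,thangaraj2017capacity,Dytso}) pin the capacity to the volume-to-noise-entropy value. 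This yields
\begin{align}
\C_i = \log \Vol[\N_i]{\sX_i} - \ent{\Zv_{\Dv,i}} + o(1), \qquad \varz \to 0,
\end{align}
with $\ent{\Zv_{\Dv,i}} = \tfrac{1}{2}\log\lr{(2\pi e)^{\N_i}\varz^{\N_i}\det\mathsf{G}_i}$. Establishing the \emph{upper} direction of this estimate --- controlling the output differential entropy $\ent{\Xv_i+\Zv_{\Dv,i}}$ uniformly over admissible inputs as the noise vanishes --- is the main obstacle, and is exactly where the cited high-SNR upper bounds are invoked; the lower direction is not even needed here, since we already know $\overline{\C}_1 \geq \underline{\C}$.

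It then remains to assemble the pieces. Summing the estimate over $i$ and adding the constant correction term, the $\tfrac{1}{2}\sum_i \log\det\mathsf{G}_i$ contributions cancel, leaving
\begin{align}
\overline{\C}_1 = \log \Vol[\N]{\sX} - \tfrac{\N}{2}\log\lr{2\pi e\varz} - \tfrac{1}{2}\log\det\mathsf{G} + o(1).
\end{align}
Here I would use the Cartesian structure $\Vol[\N]{\sX} = \prod_i \Vol[\N_i]{\sX_i}$ together with the identities $\det\mathsf{G} = \lr{\det\Hv}^{-2}$ and $\Vol[\N]{\Hv\sX} = \abs{\det\Hv}\,\Vol[\N]{\sX}$ to rewrite the right-hand side as $\log\Vol[\N]{\Hv\sX} - \tfrac{\N}{2}\log\lr{2\pi e\varz} + o(1)$. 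Finally, expanding the lower bound~\eqref{eq:EPI} as $\varz\to 0$ and dropping the additive $1$ inside the logarithm gives $\underline{\C} = \log\Vol[\N]{\Hv\sX} - \tfrac{\N}{2}\log\lr{2\pi e\varz} + o(1)$, identical to $\overline{\C}_1$; hence $\overline{\C}_1 - \underline{\C} \to 0$.
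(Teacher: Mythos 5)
Your proof is correct and follows essentially the same route as the paper's: whiten each sub-channel noise (the paper uses $\Mv_i$ with $\Dv_i = \varz \Mv_i^{-1}\Mv_i^{-T}$ where you use $\Dv_i^{-1/2}$), identify the high-SNR limit of each $\C_i$ with the log-volume of its sub-region minus the noise entropy, note that the determinant correction term is independent of $\varz$, and combine via $\Vol[\N]{\sX}=\prod_i \Vol[\N_i]{\sX_i}$ and $\Vol[\N]{\Hv\sX}=\abs{\det\Hv}\,\Vol[\N]{\sX}$ with the expanded EPI bound. If anything, you are slightly more careful on the one delicate step: the paper passes the limit $\varz \to 0$ directly inside the maximization defining $\C_i$ (deferring the justification to a remark on the sphere-packing bound), whereas you explicitly observe that $\overline{\C}_1 - \underline{\C} \ge 0$ so only the upper-bound direction of the asymptotic estimate for $\C_i$ is needed, and you name the cited high-SNR upper bounds as the tool that supplies it.
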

\begin{proof}
Let us consider the mutual information for the $i$th subchannel $\mi{\Xv_i}{\Xv_i + \Zv_{\Dv,i}} = \ent{\Xv_i + \Zv_{\Dv,i}} - \ent{\Zv_{\Dv,i}}$. Let us denote by $\Mv_i$ the $\N_i \times \N_i$ matrix such that $\Dv_i = \varz \Mv_i^{-1}\Mv_i^{-T}$. We can derive such matrix $\Mv_i$ because $\Dv_i$ is a covariance matrix and therefore it is positive-semidefinite. We have that
\begin{align}
    \mi{\Xv_i}{\Xv_i + \Zv_{\Dv,i}} &= \mi{\Xv_i}{\Mv_i\Xv_i + \Zv_{i}},
\end{align}
where $\Zv_{i}\sim \mathcal{N}(\bm{\mathsf{0}}_{\N_i}, \varz \Iv_{\N_i})$. Then, we have
\begin{align}
    \lim_{\varz \to 0} \sum_{i=1}^{\K} \C_i  &=  \sum_{i=1}^{\K} \max_{P_{\Xv_i} : \: \Xv_{i} \in \sX_{i}} \ent{\Mv_i\Xv_i} - \lim_{\varz \to 0} \ent{\Zv_i}\\
    & =  \lr{\sum_{i=1}^{\K} \log \Vol[\N_i]{\Mv_i \sX_i}} - \lim_{\varz \to 0}\ent{\Zv} \label{eq:uniformmaxdist} \\
    & =  \lr{\sum_{i=1}^{\K} \log \det \lr{\Mv_i}\Vol[\N_i]{ \sX_i}} -\lim_{\varz \to 0} \ent{\Zv},
\end{align}
where~\eqref{eq:uniformmaxdist} holds because $\ent{\Mv_i\Xv_i}$ is maximized by the uniform distribution over $\sX_i$. Notice also that
\begin{align}
    &\lim_{\varz \to 0} \frac{1}{2}\log\frac{\prod_{i=1}^\K \det \lr{\Dv_i}}{\det \lr{\Dv}} \\
    & \quad = \lim_{\varz \to 0} \frac{1}{2}\log\frac{\prod_{i=1}^\K \det \lr{\varz \Mv_i^{-1}\Mv_i^{-T}}}{\det \lr{\varz \Hv^{-1}\Hv^{-T}}} \\
    & \quad = \log\det\lr{\Hv} - \sum_{i=1}^{\K} \log \det \lr{ \Mv_i },
\end{align}
where we used the fact that $\det(\Hv^{-1}\Hv^{-T}) = (\det(\Hv^{-1}))^2 = 1/(\det(\Hv))^2$ and similarly for the $\Mv_i$'s. 

Furthermore, for the lower bound it holds that
\begin{align}
    \lim_{\varz \to 0} \underline{\C} &= \lim_{\varz \to 0} \frac{\N}{2}\log\lr{\frac{ \lr{\Vol[\N]{\Hv \sX}}^{\frac{2}{\N}} }{2 \pi e \varz}} \\
    &= \lim_{\varz \to 0} \log\lr{\Vol[\N]{\Hv \sX}} - \ent{\Zv} \\
    \begin{split}
    &= \lim_{\varz \to 0} \log \det \lr{\Hv} + \log\lr{\prod_{i=1}^{\K}\Vol[\N_i]{ \sX_i}} \\
    & \hphantom{= \lim_{\varz \to 0} \ } - \ent{\Zv}.
    \end{split}
\end{align}
Notice that, since $\sX$ is defined by a Cartesian product, it holds that $\Vol[\N]{\sX} = \prod_i \Vol[\N_i]{\sX_i}$.

Finally, by putting everything together we get
\begin{align}
    \lim_{\varz \to 0} \overline{\C}_1 - \underline{\C} = 0.
\end{align}
\end{proof}
\begin{remark}
Whenever the constraint sub-regions $\sX_i$'s are convex, we can always derive an upper bound on the $\C_i$'s by applying the sphere packing upper bound in~\cite{ourITW2020}. Since in the mentioned paper we proved that the upper bound asymptotically converges to $\frac{\N}{2} \log \lr{\frac{\lr{\Vol[\N]{\Hv \sX}}^{2/ \N}}{2 \pi e \varz}}$ for large SNR, we have that an upper bound satisfying Lemma~\ref{lem:UBgap} can always be evaluated for any full rank $\Hv$ and any convex region $\sX$. 
\end{remark}

\subsection{Low SNR regime}

At low SNR, i.e., when the Gaussian noise is dominant, the upper bound in Theorem~\ref{thm:UB} is loose. Intuitively, as $\R$ goes to zero, $\sX$ becomes smaller and the output distribution becomes closer to a Gaussian. Therefore, we can derive an upper bound tighter than~\eqref{eq:capacity} by using a Gaussian maximum-entropy argument.

Let us consider the singular value decomposition of $\Hv$, i.e., $\Hv = \Uv \Lv \Vv^T$. Given~\eqref{eq:model}, we can consider the equivalent model
\begin{align}
   \Lv^{-1} \Bar{\Yv} &= \Bar{\Xv} + \Lv^{-1} \Bar{\Zv} \\
   &= \Bar{\Xv} + \Bar{\Zv}_{\Bar{\Dv}}, \label{eq:equivmodel}
\end{align}
where $\Bar{\Yv} = \Uv^{-1} \Yv$, the input is $\Bar{\Xv} = \Vv^T \Xv$, and the noise vector is $\Bar{\Zv}_{\Bar{\Dv}} =\Lv^{-1}\Bar{\Zv} = \Lv^{-1} \Uv^{-1} \Zv$. Notice that since $\Zv$ has a rotationally symmetric distribution, we still have $\Bar{\Zv} \sim \mathcal{N}(\bm{\mathsf{0}}_{\N}, \varz \Iv_{\N})$ and $\Bar{\Zv}_{\Bar{\Dv}}  \sim {\cal N} \lr{\bm{\mathsf{0}}_{\N} , \Bar{\Dv} }$ with $\Bar{\Dv} = \varz \Lv^{-1}\Lv^{-T}$.
\begin{theorem} \label{thm:lowUB}
Given the input constraint region $\sX$ defined in~\eqref{eq:cartesianX}, the channel capacity is upper-bounded by
\begin{align}
    \C \leq \overline{\C}_2 \triangleq \lr{ \sum_{i=1}^{\N} \frac{1}{2} \log \lr{ \Pj_i + \lambda_i\lr{\Dv } }  } -\frac{1}{2}\log \det \lr{ \Dv }, \label{eq:lowUB}
\end{align}
where $\Pj_i$ is the power allocation given by the water-filling algorithm, for a total available average power $\R^2\K$ and $\N$ parallel channels with noise variances $\lambda_i\lr{\Dv}$'s.
\end{theorem}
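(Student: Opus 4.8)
The plan is to pass to the diagonalized equivalent model in~\eqref{eq:equivmodel}, relax the amplitude constraint $\Xv \in \sX$ to a second-moment (power) constraint on the input covariance, and then combine a Gaussian maximum-entropy argument with the classical water-filling optimization for parallel Gaussian channels. Since at low SNR the output is nearly Gaussian, replacing $\ent{\Bar{\Xv} + \Bar{\Zv}_{\Bar{\Dv}}}$ by its Gaussian upper bound is expected to be tight in this regime, which motivates the whole approach.

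First, I would exploit the invertibility of $\Uv$, $\Lv$, and $\Vv$: because invertible linear maps preserve mutual information, $\mi{\Xv}{\Yv} = \mi{\Bar{\Xv}}{\Bar{\Xv} + \Bar{\Zv}_{\Bar{\Dv}}} = \ent{\Bar{\Xv} + \Bar{\Zv}_{\Bar{\Dv}}} - \ent{\Bar{\Zv}_{\Bar{\Dv}}}$, where $\Bar{\Zv}_{\Bar{\Dv}} \sim \mathcal{N}(\bm{\mathsf{0}}_{\N}, \Bar{\Dv})$ with $\Bar{\Dv} = \varz \Lv^{-1}\Lv^{-T}$ diagonal. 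Note that $\Bar{\Dv}$ carries exactly the eigenvalues $\lambda_i\lr{\Dv}$ of $\Dv$ on its diagonal, and $\det \lr{\Bar{\Dv}} = \prod_i \lambda_i\lr{\Dv} = \det\lr{\Dv}$. Writing $\QMv = \mathrm{Cov}(\Bar{\Xv})$ and using the independence of $\Bar{\Xv}$ and $\Bar{\Zv}_{\Bar{\Dv}}$, the output covariance is $\QMv + \Bar{\Dv}$, so the Gaussian maximum-entropy bound gives $\ent{\Bar{\Xv} + \Bar{\Zv}_{\Bar{\Dv}}} \leq \frac{1}{2}\log\lr{(2\pi e)^{\N} \det\lr{\QMv + \Bar{\Dv}}}$; subtracting $\ent{\Bar{\Zv}_{\Bar{\Dv}}} = \frac{1}{2}\log\lr{(2\pi e)^{\N}\det\lr{\Bar{\Dv}}}$ yields $\mi{\Xv}{\Yv} \leq \frac{1}{2}\log\lr{\det\lr{\QMv + \Bar{\Dv}}/\det\lr{\Bar{\Dv}}}$.

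Next I would convert the amplitude constraint into a trace constraint on $\QMv$. Since $\Bar{\Xv} = \Vv^T\Xv$ with $\Vv$ orthogonal and each sub-vector satisfies $\norm{\Xv_i} \leq \R$, we get $\norm{\Bar{\Xv}}^2 = \norm{\Xv}^2 = \sum_{i=1}^{\K} \norm{\Xv_i}^2 \leq \R^2\K$ almost surely, hence $\mathrm{tr}(\QMv) \leq \expect{\norm{\Bar{\Xv}}^2} \leq \R^2\K$. Because the resulting bound no longer depends on $P_\Xv$, maximizing over all $\QMv \succeq 0$ with $\mathrm{tr}(\QMv) \leq \R^2\K$ upper-bounds $\C$. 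As $\Bar{\Dv}$ is diagonal, Hadamard's inequality gives $\det\lr{\QMv + \Bar{\Dv}} \leq \prod_i \lr{Q_{ii} + \lambda_i\lr{\Dv}}$ with equality for diagonal $\QMv$, and since only the diagonal of $\QMv$ enters the trace, the optimal $\QMv$ is diagonal. This reduces the problem to maximizing $\sum_{i=1}^{\N} \log\lr{\Pj_i + \lambda_i\lr{\Dv}}$ subject to $\sum_i \Pj_i \leq \R^2\K$, whose solution is exactly the water-filling allocation $\Pj_i$ in the statement; combined with $\det\lr{\Bar{\Dv}} = \det\lr{\Dv}$ this gives~\eqref{eq:lowUB}.

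The routine part is the chain of inequalities from maximum entropy; the step requiring the most care is the constrained determinant maximization, namely verifying via Hadamard's inequality that a diagonal $\QMv$ is optimal and that the trace-constrained objective coincides with the standard parallel-channel water-filling functional. I would also double-check the trace-relaxation step, ensuring that subtracting the squared mean $\norm{\expect{\Bar{\Xv}}}^2$ only helps, so that $\mathrm{tr}(\QMv) \leq \R^2\K$ holds without loss.
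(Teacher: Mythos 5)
Your proposal is correct and follows essentially the same route as the paper's proof: diagonalize via the SVD, relax the amplitude constraint to the trace constraint $\expect{\Xv^T\Xv} \leq \R^2\K$, apply a Gaussian maximum-entropy bound, decouple the coordinates (your Hadamard step is exactly the paper's bound $\ent{\widetilde{\Yv}} \leq \sum_i \ent{\widetilde{Y}_i}$), and finish with water-filling. The only cosmetic difference is that you work with the centered covariance and handle the mean term explicitly, whereas the paper uses the uncentered second moment $\expect{\Bar{\Xv}\Bar{\Xv}^T}$ directly; both are valid.
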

\begin{proof}

Since $\sX$ is a Cartesian product of $\K$ sub-regions, each one contained in a ball of radius $\R$, we have that $\sup_{\xv \in \sX}\lrc{ \norm{ \xv } } = \R\sqrt{\K}$. Given the constraint imposed by $\sX$, the looser constraint $\expect{\Xv^T \Xv} \leq \R^2 \K$ is always satisfied. We have
\begin{align} 
\C &= \max_{P_\Xv : \: \Xv \in \sX} \mi{\Xv}{\Hv\Xv + \Zv}\\
& = \max_{\substack{P_\Xv : \: \Xv \in \sX,\\ \ \expect{\Xv^T \Xv}  \leq \R^2 \K}}\mi{\Xv}{\Hv\Xv + \Zv}  \\
&  \leq \max_{P_\Xv : \: \expect{\Xv^T \Xv}  \leq \R^2 \K} \mi{\Xv}{\Hv\Xv + \Zv}  \label{eq:looserconstr}\\
& = \max_{P_{\Bar{\Xv}} : \: \expect{\Bar{\Xv}^T \Bar{\Xv}}  \leq \R^2 \K} \mi{\Bar{\Xv}}{\Bar{\Xv} + \Bar{\Zv}_{\Bar{\Dv}}} \label{eq:equivsys} \\
& = \max_{P_{\Bar{\Xv}} : \: \expect{\Bar{\Xv}^T \Bar{\Xv}} \leq \R^2 \K} \ent{\Lv^{-1} \Bar{\Yv}} - \ent{\Bar{\Zv}_{\Bar{\Dv}}}\\
& \leq \max_{P_{\Bar{\Xv}} : \: \expect{\Bar{\Xv}^T \Bar{\Xv}} \leq \R^2 \K} \ent{\widetilde{\Yv}} - \ent{\Bar{\Zv}_{\Bar{\Dv}}} \label{eq:gme} \\
&  \leq \max_{P_{\Bar{\Xv}} : \: \expect{\Bar{\Xv}^T \Bar{\Xv}} \leq \R^2 \K}  \sum_{i=1}^{\N} \frac{1}{2} \log \lr{ 2 \pi e \lr{ \expect{\abs{\Bar{X}_i}^2} + \lambda_i\lr{\Bar{\Dv}}  }} \nonumber \\
& \hphantom{\leq \max_{P_{\Bar{\Xv}} : \: \expect{\Bar{\Xv}^T \Bar{\Xv}} \leq \R^2 \K}} - \frac{1}{2} \log \det \lr{2 \pi e \Bar{\Dv}}
\label{eq:xuncorr} \\
& = \lr{ \sum_{i=1}^{\N} \frac{1}{2} \log \lr{  \Pj_i +  \lambda_i\lr{\Dv} } } - \frac{1}{2} \log \det \lr{\Dv} , \label{eq:waterfill}
\end{align} 
where the upper bound in~\eqref{eq:looserconstr} holds because we removed the constraint imposed by $\sX$, in~\eqref{eq:equivsys} we used the equivalent model defined in~\eqref{eq:equivmodel}. Since $\Vv^T$ is a unitary matrix, we have that $\expect{\Xv^T \Xv} = \expect{\Bar{\Xv}^T \Bar{\Xv}}$. Let us define the normally distributed vector $\widetilde{\Yv} \sim {\cal N} \lr{\bm{\mathsf{0}}_{\N} , \mathsf{\Sigma} }$, with $\mathsf{\Sigma} = \expect{\Bar{\Xv} \Bar{\Xv}^T} + \Bar{\Dv}$. For the upper bound in~\eqref{eq:gme} we used a Gaussian maximum-entropy bound $\ent{\Lv^{-1}\Bar{\Yv}} \leq \h (\widetilde{\Yv})$, and in~\eqref{eq:xuncorr} we used $\h (\widetilde{\Yv}) \leq \sum_i \h (\widetilde{Y}_i)$. Finally, to obtain~\eqref{eq:waterfill} we notice that $\lambda_i\lr{\Bar{\Dv}} = \lambda_i\lr{\Dv}$ for any $i$ and we apply the water-filling algorithm.
\end{proof}

The following trivial lemma shows that the upper bound $\overline{\C}_2$ is suitable for the low SNR regime.
\begin{lemma} \label{lem:lowUBgap}
The capacity upper bound $\overline{\C}_2$ tends to zero for $\varz \to \infty$
\begin{align}
    \lim_{\varz \to \infty} \overline{\C}_2 = 0.
\end{align}
\end{lemma}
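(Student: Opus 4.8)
The plan is to collapse the difference in~\eqref{eq:lowUB} into a single sum of logarithms whose arguments manifestly tend to $1$. Since $\Dv$ is symmetric positive definite, its determinant equals the product of its eigenvalues, $\det\lr{\Dv} = \prod_{i=1}^{\N} \lambda_i\lr{\Dv}$, so that $\frac{1}{2}\log\det\lr{\Dv} = \sum_{i=1}^{\N} \frac{1}{2}\log \lambda_i\lr{\Dv}$. Substituting this into $\overline{\C}_2$ and combining the two sums term by term would give
\begin{align}
\overline{\C}_2 = \sum_{i=1}^{\N} \frac{1}{2}\log\lr{1 + \frac{\Pj_i}{\lambda_i\lr{\Dv}}}.
\end{align}
This identity isolates exactly the quantity that must vanish and reduces the whole lemma to showing that each ratio $\Pj_i / \lambda_i\lr{\Dv}$ tends to zero as $\varz \to \infty$.

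Next I would control the two factors of that ratio separately. For the numerator, the water-filling powers satisfy $\sum_{i=1}^{\N} \Pj_i \le \R^2 \K$ with $\Pj_i \ge 0$, hence $0 \le \Pj_i \le \R^2 \K$ for every $i$; crucially this bound is a fixed constant, independent of $\varz$. For the denominator, I would use $\Dv = \varz \Hv^{-1}\Hv^{-T}$, which gives $\lambda_i\lr{\Dv} = \varz\, \lambda_i\lr{\Hv^{-1}\Hv^{-T}}$. Because $\Hv$ is full rank, the matrix $\Hv^{-1}\Hv^{-T}$ is positive definite and each $\lambda_i\lr{\Hv^{-1}\Hv^{-T}}$ is a strictly positive constant, so every $\lambda_i\lr{\Dv}$ grows linearly in $\varz$ and diverges. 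Combining a bounded numerator with a denominator tending to infinity yields $\Pj_i/\lambda_i\lr{\Dv} \to 0$, so each term of the sum tends to $\frac{1}{2}\log 1 = 0$, and therefore $\overline{\C}_2 \to 0$.

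The computation is genuinely routine, as the statement of the lemma already concedes; the only place demanding a moment of care is the boundedness of the water-filling powers. I would argue it directly from the total-power budget $\R^2\K$ rather than from any closed form of the allocation, since the water level itself depends on $\varz$ and may become unwieldy; the constraint $\sum_{i=1}^{\N} \Pj_i \le \R^2 \K$ sidesteps this entirely and delivers a uniform bound. With that observation in hand, the interchange of the limit with the finite sum over the $\N$ parallel channels is immediate.
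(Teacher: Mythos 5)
Your proof is correct and follows essentially the same route as the paper's: both rest on the observations that the water-filling powers $\Pj_i$ are negligible compared to the noise eigenvalues $\lambda_i\lr{\Dv}$, which grow linearly in $\varz$, and that $\det\lr{\Dv} = \prod_{i=1}^{\N}\lambda_i\lr{\Dv}$ makes the noise-only terms cancel. Your write-up is in fact slightly tidier than the paper's one-line argument: by first collapsing the bound into $\sum_{i=1}^{\N}\frac{1}{2}\log\lr{1+\Pj_i/\lambda_i\lr{\Dv}}$ and bounding $\Pj_i \leq \R^2\K$ via the total power budget, you make rigorous the paper's informal ``the $\Pj_i$'s tend to be negligible'' step and avoid the $\infty-\infty$ form implicit in its displayed limit.
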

\begin{proof}
In Theorem~\ref{thm:lowUB}, when $\varz \to \infty$ the $\Pj_i$'s tend to be negligible compared to the $\lambda_i\lr{\Dv}$, which are proportional to $\varz$. Therefore, we have that
\begin{align}
    \lim_{\varz \to \infty} \overline{\C}_2 =  \lr{ \sum_{i=1}^{\N} \frac{1}{2} \log \lr{ \lambda_i\lr{\Dv} }} - \frac{1}{2} \log \det \lr{\Dv} = 0 .
\end{align}
\end{proof}

\section{Per-Antenna Constraint} \label{S:PAnumeric}

The proposed upper bounds can be applied to a common and practical constraint, namely the \emph{per-antenna} constraint. A transmitter configuration of practical interest in MIMO systems is that of a single power amplifier for each transmitting antenna. We model the transmitted signal on each antenna as a complex signal. Let us consider a MIMO system with $\N/2$ complex dimensions
\begin{align} \label{eq:complexmodel}
\Yv' &=  \Hv' \Xv' + \Zv',
\end{align}
where $\Yv' \in \mathbb{C}^{\N/2}$ is the output vector, $\Hv'$ is any full rank channel fading matrix, $\Xv' \in \sX' = \Bocs[\N/2]{2\R} \subset \mathbb{C}^{\N/2}$ is the input vector, with $\sX'$ being the input constraint region, and $\Zv' \in \mathbb{C}^{\N/2}$ is a noise vector such that $\Zv' \sim {\cal CN}(\bm{\mathsf{0}}_{\N}, 2\varz\Iv_{\N})$.
Note that, we can still refer to the model in~\eqref{eq:model} simply by vectorizing the system in~\eqref{eq:complexmodel}: We need to define $\Hv = \text{Re}\{ \Hv' \} \otimes \Iv_{2} + \text{Im}\{ \Hv' \} \otimes \begin{bsmallmatrix}0 & -1\\1 & \ 0\end{bsmallmatrix}$, where the operator~$\otimes$ is the Kronecker product, while the output vector is such that $\Yv = [\text{Re}(Y'_1),\text{Im}(Y'_1),\dots,\text{Re}(Y'_\N),\text{Im}(Y'_\N)]^T$, and analogously for $\Xv$ and $\Zv$. Finally, notice that for any $i=1,\dots,\N/2$ the constraint $\abs{X'_i} \leq \R$ is equivalent to
\begin{align}
    \Xv_i \in \sX_i = \ball[2]{\R}, \ i=1,\dots,\N/2,
\end{align}
where $\Xv_i = \lr{\begin{smallmatrix} \text{Re}(X'_i) \\ \text{Im}(X'_i) \end{smallmatrix}}$. While the upper bound in Theorem~\ref{thm:lowUB} can be applied directly, the upper bound $\overline{\C}_1$ of Theorem~\ref{thm:UB} has to be specialized for the per-antenna case. Let us consider the following equivalent expression of $\C_i$ defined in the proof of Lemma~\ref{lem:UBgap}
\begin{align}
    \C_i = \max_{P_{\Xv_i} : \: \Xv_{i} \in \sX_{i}} \ent{\Mv_i\Xv_i} - \ent{\Zv_i}.
\end{align}
Since $\Hv$ is obtained by vectorizing $\Hv'$, the singular values of $\Hv$ are equal $2$-by-$2$, i.e., $\lambda_{2i} \lr{\Hv}=\lambda_{2i-1} \lr{\Hv}, \ i=1,\dots,\N/2$. The same is true for the singular values of $\Dv$, $\Dv_i$'s, and $\Mv_i$'s. To simplify the notation, we define
\begin{align}
    \lambda \lr{\Mv_i} \triangleq \lambda_1 \lr{\Mv_i} = \lambda_2 \lr{\Mv_i}, \quad i=1,\dots,\N/2.
\end{align}
In the per-antenna case, suitable upper bounds for each $\C_i$ are defined in~\cite{thangaraj2017capacity}. The McKellips-Type upper bound, derived in~\mbox{\cite[Eq.~(32)]{thangaraj2017capacity}}, gives the following simple closed form expression
\begin{align}
    \C_i \leq \log \lr{ 1 + \sqrt{\frac{\pi}{2}} \frac{\lambda \lr{\Mv_i} \R}{\sigma_z} + \frac{\lr{\lambda \lr{\Mv_i}\R}^2}{2 e\varz}}. \label{eq:McKUB}
\end{align}
Therefore, we have
\begin{align}
    \overline{\C}_1 &\leq \overline{\C}_{\text{PA},1} \\ 
    \begin{split}
    &= \lr{ \sum_{i=1}^{\N/2} \log \lr{ 1 + \sqrt{\frac{\pi}{2}} \frac{\lambda \lr{\Mv_i} \R}{\sigma_z} + \frac{\lr{\lambda \lr{\Mv_i}\R}^2}{2 e\varz}} } \\ &\hphantom{\leq} \ +\frac{1}{2}\log\frac{\prod_{j=1}^\K \det \lr{\Dv_j}}{\det \lr{\Dv}}.
    \end{split} \label{eq:PA_McKUB}
\end{align}
Furthermore, the authors of~\cite{thangaraj2017capacity} derive an additional upper bound, tighter than~\eqref{eq:McKUB}, that however has to be computed via a numerical optimization. For a given a given $\C_i$, let us denote by $\overline{\C}_{\text{Ref},i} \geq \C_i$ this \emph{refined} upper bound~\cite[Eq.~(82)]{thangaraj2017capacity}. Then, by plugging the $\overline{\C}_{\text{Ref},i}$'s into Theorem~\ref{thm:UB}, we define $\overline{\C}_{\text{PA},2}$ as follows
\begin{align}
    \C \leq \overline{\C}_1 \leq \overline{\C}_{\text{PA},2} \triangleq \lr{\sum_{i=1}^{\K} \overline{\C}_{\text{Ref},i} }+\frac{1}{2}\log\frac{\prod_{j=1}^\K \det \lr{\Dv_j}}{\det \lr{\Dv}}. \label{eq:PA_RefUB}
\end{align}

\subsection{Numerical Results}

For the per-antenna case, let us now evaluate numerically $\underline{\C}$, $\overline{\C}_2$, and both the specialized versions of $\overline{\C}_1$. We evaluate the bounds for a random realization of $\Hv$.
\begin{figure}[t]
    \centering
    \input{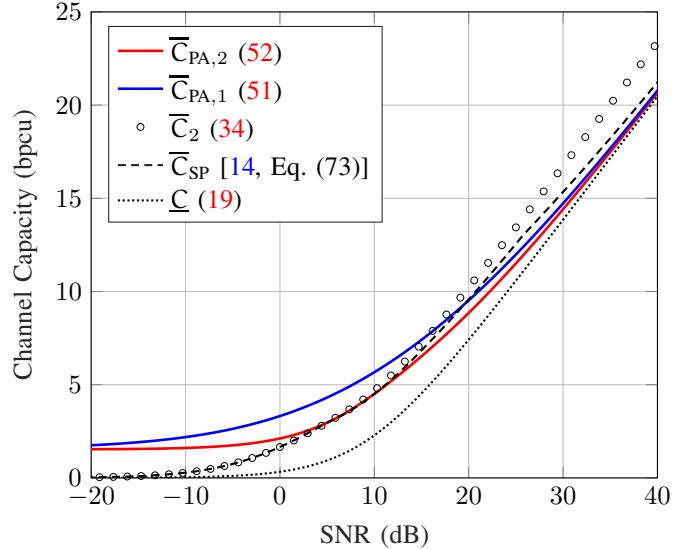}
    \caption{Capacity bounds in bit per channel use (bpcu) versus SNR, for $\N=4$, $\lambda \lr{\Mv_1} = 0.52$, and $\lambda \lr{\Mv_2} = 0.37$.}
    \label{fig:capacityPA}
\end{figure}%
If we consider the compound upper bound given by $\min\lr{\overline{\C}_2, \overline{\C}_{\text{PA},2} }$ we see that, as predicted by Lemma~\ref{lem:UBgap} and Lemma~\ref{lem:lowUBgap}, the capacity gap between upper and lower bounds is indeed vanishing both at high SNR, thanks to $\overline{\C}_{\text{PA},2}$, and at low SNR, thanks to $\overline{\C}_2$. Moreover, we compare the proposed bounds to the previous best in the existing literature, which we proposed in~\cite{arxivSP}. Specifically, let us denote by $\overline{\C}_{\text{SP}}$ the upper bound based on a sphere packing argument~\mbox{\cite[Eq.~(73)]{arxivSP}}. The sphere packing has also the properties of being vanishing at both low and high SNR, but as seen in Fig.~\ref{fig:capacityPA}, the upper bounds $\overline{\C}_{\text{PA},1}$ and $\overline{\C}_{\text{PA},2}$, derived from of Theorem~\ref{thm:UB}, can improve the tightness of the capacity gap also at finite SNR levels of practical interest.

\section{Conclusion} \label{S:conclusion}

We have derived two upper bounds on the channel capacity of peak amplitude-constrained vector Gaussian channels affected by fading. We considered constraint regions that can be decomposed into a Cartesian product, reflecting the fact that each power amplifier feeds a subset of the transmitting antennas. We have proved that the first upper bound, suitable for the high signal-to-noise (SNR) regime, has vanishing capacity gap when compared to the entropy power inequality lower bound. The second proposed upper bound is suitable for the low SNR regime. Finally, for a transmitter that employs separate power amplifiers for each antenna, we have shown an example where the proposed upper bounds are tighter than the best known upper bounds at any SNR.

\bibliographystyle{IEEEtran}
\bibliography{bibliofile}

\end{document}